\newcommand{\solve}{\mathsf{Solve}}
\newcommand{\prune}{\mathsf{Prune}}
\newcommand{\pponly}[1]{}
\renewcommand{\Function}[2]{%
        \csname ALG@cmd@\ALG@L @Function\endcsname{#1}{#2}%
        \def\jayden@currentfunction{#1}%
}
\newcommand{\funclabel}[1]{%
        \@bsphack
        \protected@write\@auxout{}{%
                \string\newlabel{#1}{{\jayden@currentfunction}{\thepage}}%
        }%
        \@esphack
}
\newcommand{\lrf}{\mathcal{L}_{\mathbb{R}_{\mathcal{F}}}}
\newcommand{\ucnf}{\text{CNF}^{\forall}}
\newcommand{\norm}[1]{\lVert #1 \rVert}
\newenvironment{spmatrix}
{\begin{small}\begin{pmatrix}}
{\end{pmatrix}\end{small}}
\begin{document}

\title{Delta-Decision Procedures for Exists-Forall Problems over the Reals}
\author{Soonho Kong\inst{1} \and Armando Solar-Lezama\inst{2} \and Sicun Gao\inst{3}}
\institute{
  Toyota Research Institute, Cambridge, USA\\
  \email{soonho.kong@tri.global} \and
  Massachusetts Institute of Technology, USA\\
  \email{asolar@csail.mit.edu} \and
  University of California, San Diego, USA\\
  \email{sicung@ucsd.edu}}

\maketitle
\begin{abstract}
  We propose $\delta$-complete decision procedures for solving satisfiability of nonlinear
  SMT problems over real numbers that contain universal quantification and a wide range
  of nonlinear functions. The methods combine interval constraint propagation, counterexample-guided synthesis, and numerical optimization. In particular, we show how to handle the interleaving of numerical and symbolic computation to ensure delta-completeness in quantified reasoning. We demonstrate that
  the proposed algorithms can handle various challenging global optimization and control synthesis problems that are beyond the
  reach of existing solvers.
\end{abstract}


\section{Introduction}\label{sec:introduction}

Much progress has been made in the framework of delta-decision procedures for solving nonlinear Satisfiability Modulo Theories (SMT) problems over real numbers~\cite{DBLP:conf/lics/GaoAC12,DBLP:conf/cade/GaoAC12}. Delta-decision procedures allow one-sided bounded numerical errors, which is a practically useful relaxation that significantly reduces the computational complexity of the problems. With such relaxation, SMT problems with hundreds of variables and highly nonlinear constraints (such as differential equations) have been solved in practical applications~\cite{kong2015dreach}. Existing work in this direction has focused on satisfiability of quantifier-free SMT problems. Going one level up, SMT problems with both free and universally quantified variables, which correspond to $\exists\forall$-formulas over the reals, are much more expressive. For instance, such formulas can encode the search for robust control laws in highly nonlinear dynamical systems, a central problem in robotics. Non-convex, multi-objective, and disjunctive optimization problems can all be encoded as $\exists\forall$-formulas, through the natural definition of ``finding some $x$ such that for all other $x'$, $x$ is better than $x'$ with respect to certain constraints.'' Many other examples from various areas are listed in~\cite{ratschan2012applications}. 


Counterexample-Guided Inductive Synthesis (CEGIS)~\cite{solar2008program} is a framework for program synthesis that can be applied to solve generic exists-forall problems. The idea is to break the process of solving $\exists\forall$-formulas into a loop between {\em synthesis} and {\em verification}. The synthesis procedure finds solutions to the existentially quantified variables and gives the solutions to the verifier to see if they can be validated, or falsified by {\em counterexamples}. The counterexamples are then used as learned constraints for the synthesis procedure to find new solutions. This method has been shown effective for many challenging problems, frequently generating more optimized programs than the best manual implementations~\cite{solar2008program}. However, a direct application of CEGIS to decision problems over real numbers suffers from several problems. CEGIS is complete in finite domains because it can explicitly enumerate solutions, which can not be done in continuous domains. Also, CEGIS ensures progress by avoiding duplication of solutions, while due to numerical sensitivity, precise control over real numbers is difficult. In this paper we propose methods that bypass such difficulties. 


We propose an integration of the CEGIS method in the branch-and-prune framework as a generic algorithm for solving nonlinear $\exists\forall$-formulas over real numbers and prove that the algorithm is $\delta$-complete. We achieve this goal by using CEGIS-based methods for turning universally-quantified constraints into pruning operators, which is then used in the branch-and-prune framework for the search for solutions on the existentially-quantified variables. In doing so, we take special care to ensure correct handling of numerical errors in the computation, so that $\delta$-completeness can be established for the whole procedure.

The paper is organized as follows. We first review the background, and then present the details of the main algorithm in Section 3. We then give a rigorous proof of the $\delta$-completeness of the procedure in Section 4. We demonstrated the effectiveness of the procedures on various global optimization and Lyapunov function synthesis problems in Section 5.

\paragraph{Related Work.}
Decision procedures for quantified formulas in real arithmetic have been studied in various areas such as symbolic computation and constraint solving. Cylindrical algebraic decomposition~\cite{collins}) is a well-known method for quantifier elimination for formulas that only contain polynomials. The procedures are known to have very high computational complexity (double exponential~\cite{BrownD07}), and can not handle problems with transcendental functions. Quantified constraints over real numbers have been studied in constraint programming~\cite{DBLP:conf/cp/BenhamouG00,DBLP:journals/ai/GentNRS08,DBLP:journals/tocl/Ratschan06,DBLP:journals/jsc/Ratschan02,DBLP:journals/mics/HladikR14,Nightingale2005}. In particular, the work in~\cite{DBLP:journals/tocl/Ratschan06,DBLP:journals/jsc/Ratschan02,DBLP:journals/mics/HladikR14} develops quasi-decision procedures for solving quantified constraints over the reals with a numerically-relaxed notion of completeness~\cite{DBLP:journals/jar/FranekRZ16} that is closely related to the notion of delta-completeness here. In comparison, the focus of our work (apart from improving scalability) can be seen as an extension of the same line of work that further parameterizes the procedures with explicit bounds on the numerical errors, which requires the design of various new techniques such as double-sided error control (Section 3.2). State-of-the-art SMT solvers such as CVC4~\cite{cvc4} and Z3~\cite{z3} provide limited quantified reasoning  support~\cite{Moura:2007:EES:1420853.1420872,10.1007/978-3-662-46681-0_14,GeBT-CADE-07,DBLP:conf/cav/ReynoldsDKTB15}
for decidable fragments of first-order logic and theories. Optimization Modulo Theories (OMT) is a new field that focuses on
solving a restricted form of quantified
reasoning~\cite{Nieuwenhuis2006,Cimatti:2010:SMT:2175554.2175565,Sebastiani:2012:OSL:2352896.2352936},
focusing on linear formulas. Generic
approaches for solving exists-forall problems based on the CEGIS framework have typically been used as a heuristic procedure without aiming for completeness guarantees~\cite{dutertresolving}. 
  

\section{Preliminaries}\label{sec:preliminaries}

\subsection{Delta-Decisions and $\ucnf$-Formulas}

We consider first-order formulas over real numbers that can contain arbitrary nonlinear functions that can be numerically approximated, such as polynomials, exponential, and trignometric functions. Theoretically, such functions are called {\em Type-2 computable} functions~\cite{CAbook}. We write this language as $\lrf$, formally defined as:
\begin{definition}[The $\lrf$ Language]\label{def:lrf_language}
Let $\mathcal{F}$ be the set of Type-2 computable functions. We define $\lrf$ to be the following first-order language:
\begin{align*}
t& := x \; | \; f( \vec{t} ), \mbox{ where }f\in \mathcal{F}\mbox{, possibly 0-ary (constant)};\\
\varphi& := t(\vec x)> 0 \; | \; t(\vec x)\geq 0 \; | \; \varphi\wedge\varphi \; | \; \varphi\vee\varphi \; | \; \exists x_i\varphi \; |\; \forall x_i\varphi.
\end{align*}
\end{definition}
\begin{remark}
Negations are not needed as part of the base syntax, as it can be defined through arithmetic: $\neg (t>0)$ is simply $-t\geq 0$. Similarly, an equality $t=0$ is just $t\geq 0\wedge -t\geq 0$. In this way we can put the formulas in normal forms that are easy to manipulate.
\end{remark}
We will focus on the $\exists\forall$-formulas in $\lrf$ in this paper. Decision problems for such formulas are equivalent to satisfiability of SMT with universally quantified variables, whose free variables are implicitly existentially quantified. 

It is clear that, when the quantifier-free part of an $\exists\forall$ formula is in Conjunctive Normal Form (CNF), we can always push the universal quantifiers inside each conjunct, since universal quantification commute with conjunctions. Thus the decision problem for any $\exists\forall$-formula is equivalent to the satisfiability of formulas in the following normal form:
\begin{definition}[CNF$^{\forall}$ Formulas in $\lrf$]\label{def:cnf_forall_formula}
We say an $\lrf$-formula $\varphi$ is in the CNF$^{\forall}$, if it is of the form
\begin{eqnarray}
\varphi(\vec x) := \bigwedge_{i=0}^m \Big( \forall \vec y (\bigvee_{j=0}^{k_i}c_{ij}(\vec x, \vec y)) \Big)
\end{eqnarray}
where $c_{ij}$ are atomic constraints. Each universally quantified conjunct of the formula, i.e.,
\[\forall \vec y (\bigvee_{j=0}^{k_i}c_{ij}(\vec x, \vec y))\]
is called as a {\bf $\forall$-clause}. Note that $\forall$-clauses only contain disjunctions and no nested conjunctions. If all the $\forall$-clauses are vacuous, we say $\varphi(\vec x)$ is a {\em ground SMT} formula.
\end{definition}
The algorithms described in this paper will assume that an input formula is in CNF$^{\forall}$ form. We can now define the {\em $\delta$-satisfiability} problems for $\ucnf$-formulas.
\begin{definition}[Delta-Weakening/Strengthening]\label{def:delta_weakening}
Let $\delta\in\mathbb{Q}^+$ be arbitrary. Consider an arbitrary $\ucnf$-formula of the form
\[
\varphi(\vec x) := \bigwedge_{i=0}^m \Big( \forall \vec y (\bigvee_{j=0}^{k_i} f_{ij}(\vec x, \vec y)\circ 0) \Big)
\]
where $\circ\in\{>,\geq\}$. We define the $\delta$-weakening of $\varphi(\vec x)$ to be:
\[
\varphi^{-\delta}(\vec x) := \bigwedge_{i=0}^m \Big( \forall \vec y (\bigvee_{j=0}^{k_i} f_{ij}(\vec x, \vec y)\geq -\delta) \Big).
\]
Namely, we weaken the right-hand sides of all atomic formulas from $0$ to $-\delta$. Note how the difference between strict and nonstrict inequality becomes unimportant in the $\delta$-weakening. We also define its dual, the $\delta$-strengthening of $\varphi(\vec x)$:
\[
\varphi^{+\delta}(\vec x) := \bigwedge_{i=0}^m \Big( \forall \vec y (\bigvee_{j=0}^{k_i} f_{ij}(\vec x, \vec y)\geq +\delta) \Big).
\]
\end{definition}
Since the formulas in the normal form no longer contain negations, the relaxation on the atomic formulas is implied by the original formula (and thus weaker), as was easily shown in~\cite{DBLP:conf/lics/GaoAC12}.
\begin{proposition}
For any $\varphi$ and $\delta\in\mathbb{Q}^+$, $\varphi^{-\delta}$ is logically weaker, in the sense that $\varphi\rightarrow \varphi^{-\delta}$ is always true, but not vice versa.
\end{proposition}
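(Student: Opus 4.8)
The plan is to establish the two directions separately. The implication $\varphi \rightarrow \varphi^{-\delta}$ follows from a monotonicity argument that exploits the fact that $\ucnf$-formulas contain no negations, so every atom sits in a syntactically positive position; the failure of the converse is witnessed by a single trivial formula.

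For the forward direction, the first observation is that $\varphi$ and $\varphi^{-\delta}$ have the identical logical skeleton --- the same outer conjunction $\bigwedge_{i=0}^m$, the same universal blocks $\forall \vec y$, and the same inner disjunctions $\bigvee_{j=0}^{k_i}$ --- and differ only atom-by-atom, where $f_{ij}(\vec x,\vec y)\circ 0$ with $\circ\in\{>,\geq\}$ is replaced by $f_{ij}(\vec x,\vec y)\geq -\delta$. The atomic step is immediate: since $\delta\in\mathbb{Q}^+$ we have $-\delta<0$, so for every real value $v$, $v>0$ entails $v\geq -\delta$, and $v\geq 0$ entails $v\geq -\delta$. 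I would then lift this to arbitrary $\ucnf$-formulas by a short structural induction: if $\psi_1\rightarrow\psi_1'$ and $\psi_2\rightarrow\psi_2'$, then $\psi_1\wedge\psi_2\rightarrow\psi_1'\wedge\psi_2'$ and $\psi_1\vee\psi_2\rightarrow\psi_1'\vee\psi_2'$; and if $\psi(\vec y)\rightarrow\psi'(\vec y)$ holds for every value of $\vec y$, then $\forall\vec y\,\psi(\vec y)\rightarrow\forall\vec y\,\psi'(\vec y)$. Applying this with the atomic fact above as the base case, and reading ``always true'' as ``valid under every assignment to the free variables $\vec x$'', yields $\varphi(\vec x)\rightarrow\varphi^{-\delta}(\vec x)$. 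This is precisely the monotonicity of negation-free contexts already alluded to in the paragraph preceding the proposition.

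For the converse, it suffices to exhibit one $\varphi$ for which $\varphi^{-\delta}\rightarrow\varphi$ fails. Take the ground, single-atom formula $\varphi(x) := x>0$ (so $m=0$, $k_0=0$, and the $\forall$-block is vacuous); its $\delta$-weakening is $\varphi^{-\delta}(x) = x\geq -\delta$. At the assignment $x := -\delta/2$, which is a legitimate real since $\delta>0$, we have $\varphi^{-\delta}(-\delta/2)$ true but $\varphi(-\delta/2)$ false, so $\varphi^{-\delta}\rightarrow\varphi$ is not valid. I do not expect a genuine obstacle here: the result is essentially bookkeeping about a negation-free fragment, and the only point deserving care is being explicit in the induction that no atom ever occurs under a negation --- which is exactly what makes every formula context monotone.
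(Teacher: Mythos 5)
Your proposal is correct and follows exactly the argument the paper intends: the paper only sketches it (citing the negation-free normal form so that atom-wise weakening, $v>0$ or $v\geq 0$ implies $v\geq-\delta$, lifts monotonically through conjunction, disjunction, and universal quantification), and your structural induction plus the ground counterexample $x>0$ versus $x\geq-\delta$ fills in precisely that reasoning. No gaps.
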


\begin{example}\label{example:delta-weakening}
Consider the formula
\[\forall y\; f(x,y)=0.\]
It is equivalent to the $\ucnf$-formula
\[(\forall y (-f(x,y)\geq 0)\wedge \forall y (f(x,y)\geq 0))\]
whose $\delta$-weakening is of the form
\[(\forall y (-f(x,y)\geq -\delta)\wedge \forall y (f(x,y)\geq -\delta))\]
which is logically equivalent to
\[\forall y (\norm{f(x,y)} \leq \delta).\]
We see that the weakening of $f(x,y)=0$ by $\norm{f(x,y)} \leq \delta$ defines a natural relaxation.
\end{example}
\begin{definition}[Delta-Completeness]\label{def:delta-completeness}
Let $\delta\in\mathbb{Q}^+$ be arbitrary. We say an algorithm is $\delta$-complete for $\exists\forall$-formulas in $\lrf$, if for any input $\ucnf$-formula $\varphi$, it always terminates and returns one of the following answers correctly:
\begin{itemize}
\item {\bf unsat}: $\varphi$ is unsatisfiable.
\item {\bf $\delta$-sat}: $\varphi^{-\delta}$ is satisfiable.
\end{itemize}
When the two cases overlap, it can return either answer.
\end{definition}

\subsection{The Branch-and-Prune Framework}

A practical algorithm that has been shown to be $\delta$-complete for
ground SMT formulas is the {\em branch-and-prune} method developed for
interval constraint propagation~\cite{handbookICP}. A description of
the algorithm in the simple case of an equality constraint is in
Algorithm~\ref{algo:icp}.
\begin{algorithm}[ht]
        \caption{Branch-and-Prune}\label{algo:icp}
        \begin{algorithmic}[1]
                \Function{Solve}{$f(x)=0$, $B_x$, $\delta$}
                \State $S \gets \{B_x\}$
                \While{$S \neq \emptyset$}
                        \State $B \gets S.\mathsf{pop}()$
                        \State $B' \gets\mathsf{FixedPoint}\Big(\lambda B. B\cap \prune(B,f(x)=0), B\Big)$\label{algo:icp:prune}
                        \If{$B' \not = \emptyset$}
                            \If{$\norm{f(B')} > \delta$}\label{algo:icp:delta-check}
                                \State $\{B_1,B_2\} \gets \mathsf{Branch}(B')$\label{algo:icp:branch}
                                \State $S.\mathsf{push}(\{B_1,B_2\})$
                            \Else
                                \State \Return {\bf $\delta$-sat}\label{algo:icp:delta-sat}
                            \EndIf
                        \EndIf
                \EndWhile
                \State \Return {\bf unsat}\label{algo:icp:unsat}
                \EndFunction
        \end{algorithmic}
\end{algorithm}

The procedure combines {\em pruning} and {\em branching}
operations. Let $\mathcal{B}$ be the set of all boxes (each variable
assigned to an interval), and $\mathcal{C}$ a set of constraints in
the language.  \textsf{FixedPoint(g, B)} is a procedure computing a
fixedpoint of a function $g : \mathcal{B} \to \mathcal{B}$ with an
initial input $B$.  A pruning operation
$\prune: \mathcal{B} \times \mathcal{C} \rightarrow \mathcal{B}$ takes a
box $B\in\mathcal{B}$ and a constraint as input, and returns an
ideally smaller box $B'\in \mathcal{B}$ (Line~\ref{algo:icp:prune}) that is guaranteed to
still keep all solutions for all constraints if there is any.  When
such pruning operations do not make progress, the \textsf{Branch}
procedure picks a variable, divides its interval by halves, and creates
two sub-problems $B_1$ and $B_2$ (Line~\ref{algo:icp:branch}). The procedure terminates if
either all boxes have been pruned to be empty (Line~\ref{algo:icp:unsat}), or if a small
box whose maximum width is smaller than a given threshold $\delta$ has
been found (Line~\ref{algo:icp:delta-sat}). In~\cite{DBLP:conf/cade/GaoAC12}, it has been
proved that Algorithm~\ref{algo:icp} is $\delta$-complete iff the
pruning operators satisfy certain conditions for being {\em
  well-defined} (Definition~\ref{welldefined}).


\section{Algorithm}\label{sec:algorithms}
The core idea of our algorithm for solving $\ucnf$-formulas is as follows. We view the universally quantified constraints as a special type of pruning operators, which can be used to reduce possible values for the free variables based on their consistency with the universally-quantified variables. We then use these special $\forall$-pruning operators in an overall branch-and-prune framework to solve the full formula in a $\delta$-complete way. A special technical difficulty for ensuring $\delta$-completeness is to control numerical errors in the recursive search for counterexamples, which we solve using {\em double-sided error control}. We also improve quality of counterexamples using local-optimization algorithms in the $\forall$-pruning operations, which we call {\em locally-optimized counterexamples}.

In the following sections we describe these steps in detail. For notational simplicity we will omit vector symbols and assume all variable names can directly refer to vectors of variables.

\subsection{$\forall$-Clauses as Pruning Operators}

Consider an arbitrary $\ucnf$-formula\footnote{Note that without loss of generality we only use nonstrict inequality here, since in the context of $\delta$-decisions the distinction between strict and nonstrict inequalities as not important, as explained in Definition~\ref{def:delta_weakening}.}
\[
\varphi(x) := \bigwedge_{i=0}^m \Big( \forall y (\bigvee_{j=0}^{k_i} f_{ij}(x, y)\geq 0) \Big).
\]
It is a conjunction of $\forall$-clauses as defined in Definition~\ref{def:cnf_forall_formula}. Consequently, we only need to define pruning operators for $\forall$-clauses so that they can be used in a standard branch-and-prune framework. The full algorithm for such pruning operation is described in Algorithm~\ref{algo:generic}.
\begin{algorithm}[!ht]
    \caption{$\forall$-Clause Pruning}\label{algo:generic}
    \begin{algorithmic}[1]
        \Statex
        \Function{Prune}{$B_x$, $B_y$, $\forall y \bigvee_{i=0}^k f_i(x,y)\geq 0$, $\delta'$, $\varepsilon$, $\delta$}
            \Repeat
                \State $B_x^{\mathrm{prev}} \gets B_x$
                \State $\psi \gets \bigwedge_i f_i(x,y)<0$\label{algo:generic:ce_begin}
                \State $\psi^{+\varepsilon}\gets$ \textsf{Strengthen}($\psi,\varepsilon$)\label{algo:generic:ce_strengthening}
                \State $b \gets \solve(y, \psi^{+\varepsilon}, \delta')$\label{algo:generic:find_ce} \Comment{$0 < \delta' < \varepsilon < \delta $ should hold.}
                \If{$b = \emptyset$}
                    \State \Return $B_x$ \Comment{No counterexample found, stop pruning.}\label{algo:generic:no_ce}
                \EndIf\label{algo:generic:ce_end}
                \For{$i \in \{ 0,...,k\}$}\label{algo:generic:pruning_begin}
                        \State $B_i \gets B_x \cap \prune\Big(B_x, f_{i}(x,b)\geq 0\Big)$\label{algo:generic:pruning_intersect}
                \EndFor\label{algo:generic:pruning_loop_end}
                \State $B_x \gets \bigsqcup_{i=0}^k B_{i}$\label{algo:generic:pruning_end}\label{algo:generic:pruning_hull}
            \Until{$B_{x} \neq B_x^{\mathrm{prev}}$}
            \State \Return $B_{x}$
        \EndFunction
    \end{algorithmic}
\end{algorithm}

In Algorithm~\ref{algo:generic}, the basic idea is to use special $y$ values that witness the {\em negation} of the original constraint to prune the box assignment on $x$. The two core steps are as follows.
\begin{enumerate}
\item Counterexample generation (Line~\ref{algo:generic:ce_begin} to~\ref{algo:generic:ce_end}). The query for a counterexample $\psi$ is defined as the negation of the quantifier-free part of the constraint (Line~\ref{algo:generic:ce_begin}). The method $\solve(y, \psi, \delta)$ means to obtain a solution for the variables $y$ $\delta$-satisfying the logic formula $\psi$. When such a solution is found, we have a counterexample that can falsify the $\forall$-clause on some choice of $x$. Then we use this counterexample to prune on the domain of $x$, which is currently $B_x$.

The strengthening operation on $\psi$ (Line~\ref{algo:generic:ce_strengthening}), as well as the choices of $\varepsilon$ and $\delta'$, will be explained in the next subsection. 

\item Pruning on $x$ (Line~\ref{algo:generic:pruning_begin} to~\ref{algo:generic:pruning_end}). In the counterexample generation step, we have obtained a counterexample $b$. The pruning operation then uses this value to prune on the current box domain $B_x$. Here we need to be careful about the logical operations. For each constraint, we need to take the intersection of the pruned results on the counterexample point (Line~\ref{algo:generic:pruning_intersect}). Then since the original clause contains the disjunction of all constraints, we need to take the box-hull ($\bigsqcup$) of the pruned results (Line~\ref{algo:generic:pruning_hull}).
\end{enumerate}
We can now put the pruning operators defined for all $\forall$-clauses in the overall branch-and-prune framework shown in Algorithm~\ref{algo:icp}.

The pruning algorithms are inspired by the CEGIS loop, but are different in multiple ways. First, we never explicitly compute any candidate solution for the free variables. Instead, we only prune on their domain boxes. This ensures that the size of domain box decreases (together with branching operations), and the algorithm terminates. Secondly, we do not explicitly maintain a collection of constraints. Each time the pruning operation works on previous box -- i.e., the learning is done on the model level instead of constraint level. On the other hand, being unable to maintain arbitrary Boolean combinations of constraints requires us to be more sensitive to the type of Boolean operations needed in the pruning results, which is different from the CEGIS approach that treats solvers as black boxes.

\subsection{Double-Sided Error Control}

To ensure the correctness of Algorithm~\ref{algo:generic}, it is
necessary to avoid spurious counterexamples which do \emph{not}
satisfy the negation of the quantified part in a $\forall$-clause. We
illustrate this condition by consider a \emph{wrong} derivation of
Algorithm~\ref{algo:generic} where we do not have the strengthening
operation on Line~\ref{algo:generic:ce_strengthening} and try to find
a counterexample by directly executing
$b \gets \mathrm{Solve}(y, \psi = \bigwedge_{i=0}^k f_i(x,y)< 0,
\delta)$. Note that the counterexample query $\psi$ can be highly
nonlinear in general and not included in a decidable fragment. As a
result, it must employ a delta-decision procedure (i.e. \textrm{Solve}
with $\delta' \in \mathbb{Q}^{+}$) to find a counterexample. A
consequence of relying on a delta-decision procedure in the
counterexample generation step is that we may obtain a spurious
counterexample $b$ such that for some $x = a$:
\[
  \bigwedge_{i=0}^k f_i(a,b) \le \delta
  \quad \textrm{instead of} \quad
  \bigwedge_{i=0}^k f_i(a,b) < 0.
\]
Consequently the following pruning operations fail to reduce their
input boxes because a spurious counterexample does not witness any
inconsistencies between $x$ and $y$. As a result, the fixedpoint loop
in this $\forall$-Clause pruning algorithm will be terminated
immediately after the first iteration. This makes the outer-most
branch-and-prune framework (Algorithm~\ref{algo:icp}), which employs
this pruning algorithm, solely rely on branching operations. It can
claim that the problem is $\delta$-satisfiable while providing an
arbitrary box $B$ as a model which is small enough
($\norm{B} \le \delta$) but does not include a $\delta$-solution.


To avoid spurious counterexamples, we directly strengthen the
counterexample query with $\varepsilon \in \mathbb{Q}^+$ to have
\[\psi^{+\varepsilon} := \bigwedge_{i=0}^k f_i(a,b) \leq -\varepsilon.\]
Then we choose a weakening parameter $\delta' \in \mathbb{Q}$ in
solving the strengthened formula. By analyzing the two possible
outcomes of this counterexample search, we show the constraints on
$\delta'$, $\varepsilon$, and $\delta$ which guarantee the
correctness of Algorithm~\ref{algo:generic}:
\begin{itemize}
\item {\bf $\delta'$-sat case}: We have $a$ and $b$ such that
  $\bigwedge_{i=0}^k f_i(a,b) \leq -\varepsilon + \delta'$. For
  $y = b$ to be a valid counterexample, we need
  $-\varepsilon + \delta' < 0$. That is, we have
  \begin{eqnarray}
    \delta' < \varepsilon.
  \end{eqnarray}
  In other words, the strengthening factor $\varepsilon$ should be
  greater than the weakening parameter $\delta'$ in the counterexample
  search step.
\item {\bf unsat case}: By checking the absence of counterexamples, it
  proved that
  $\forall y \bigvee_{i=0}^k f_i(x, y) \ge -\varepsilon$ for all
  $x \in B_{x}$. Recall that we want to show that
  $\forall y \bigvee_{i=0}^k f_i(x, y) \ge -\delta$ holds for some
  $x = a$ when Algorithm~\ref{algo:icp} uses this pruning algorithm
  and returns $\delta$-sat. To ensure this property, we need the
  following constraint on $\varepsilon$ and $\delta$:
  \begin{eqnarray}
    \varepsilon < \delta.
  \end{eqnarray}
\end{itemize}



\subsection{Locally-Optimized Counterexamples}
The performance of the pruning algorithm for $\ucnf$-formulas depends
on the quality of the counterexamples found during the search.

\begin{figure}[!ht]
  \centering
  \begin{subfigure}[b]{0.49\textwidth}
    \centering
    \includegraphics[width=\textwidth]{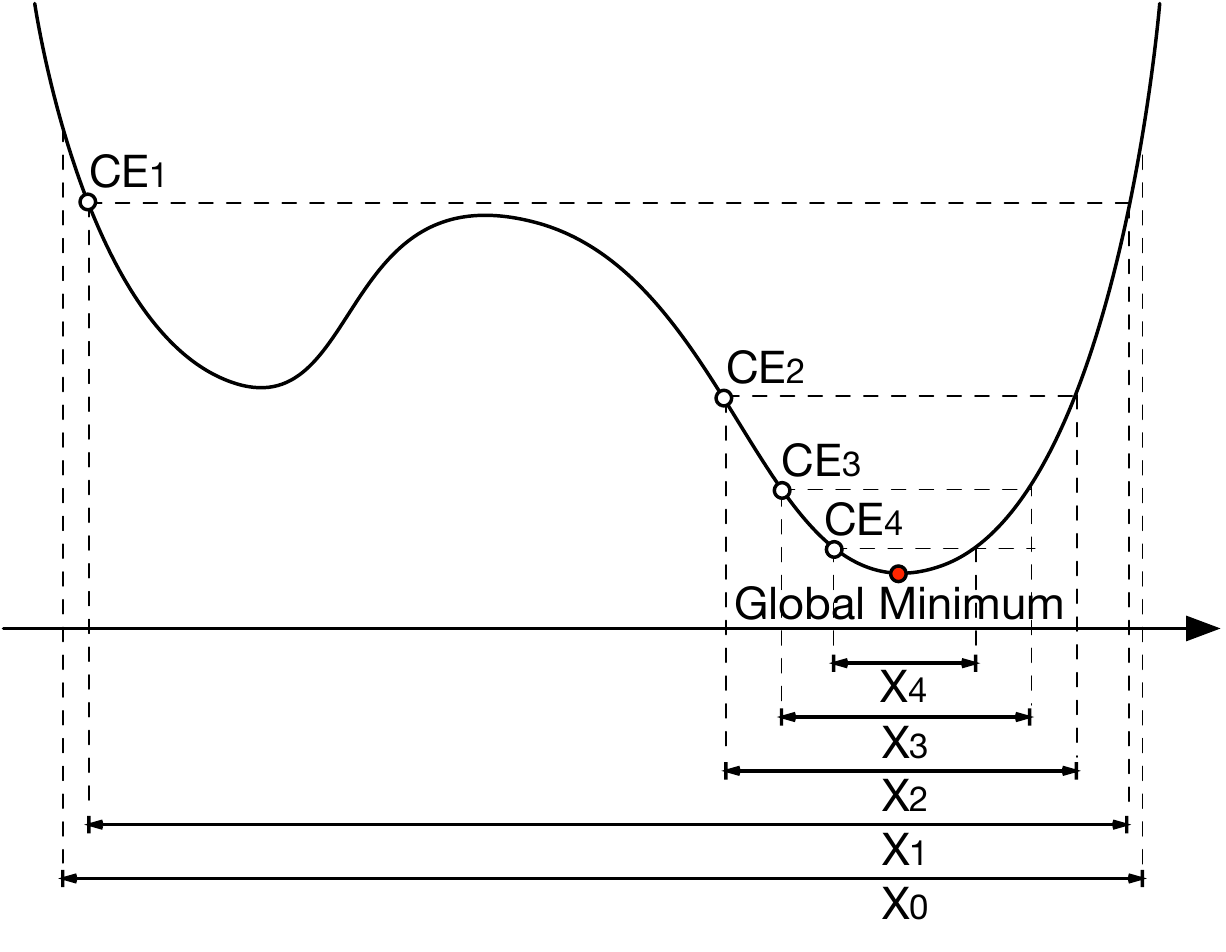}
    \caption{Without local optimization.}
    \label{fig:no_local_optimization}
  \end{subfigure}
  \begin{subfigure}[b]{0.49\textwidth}
    \centering
    \includegraphics[width=\textwidth]{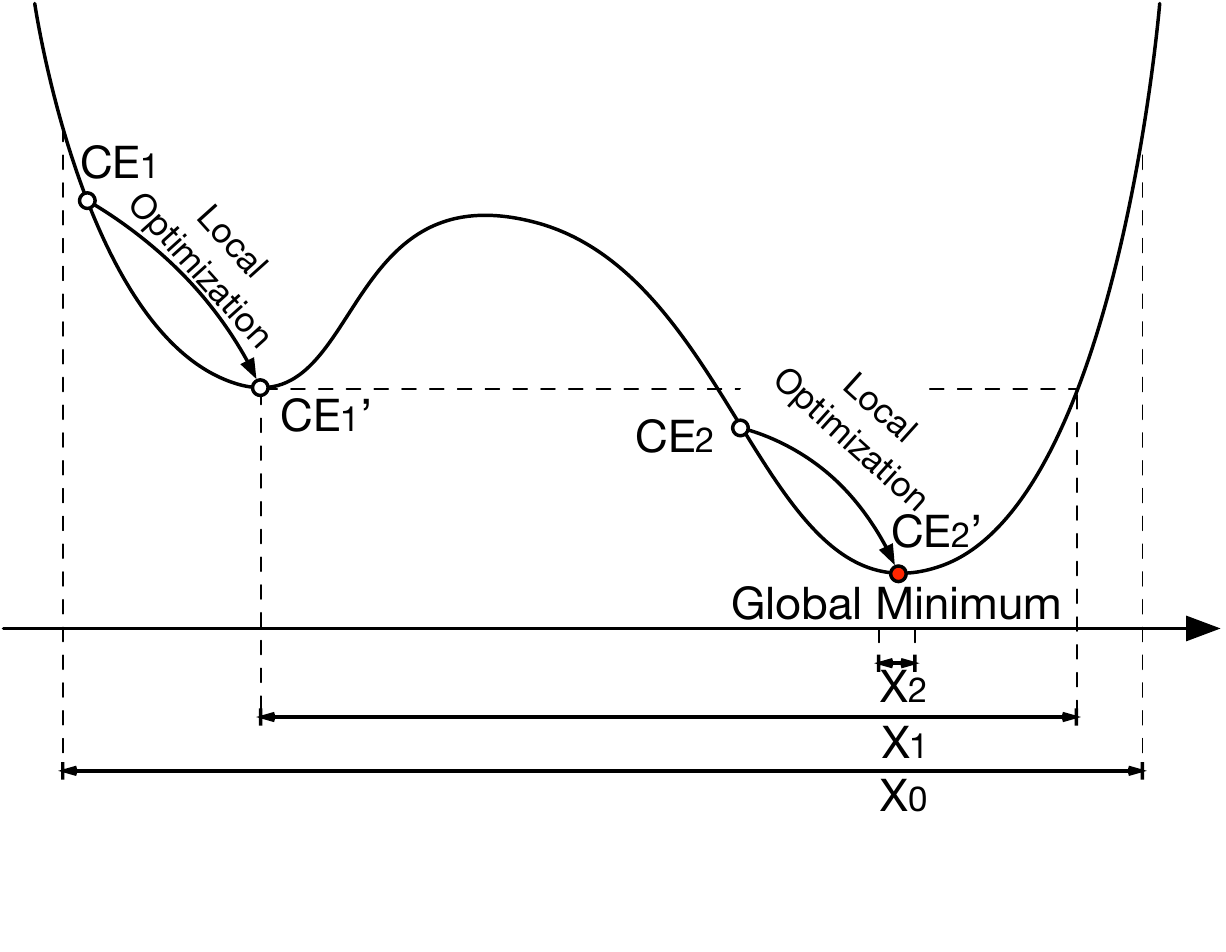}
    \caption{With local optimization.}
    \label{fig:local_optimization}
  \end{subfigure}
  \label{fig:with_and_without_local_optimization}
  \caption[]{Illustrations of the pruning algorithm for $\ucnf$-formula with and without using local optimization.}
\end{figure}

Figure~\ref{fig:no_local_optimization} illustrates this point by
visualizing a pruning process for an unconstrained minimization
problem, $\exists x \in X_0 \forall y \in X_0 f(x) \le f(y)$. As it
finds a series of counterexamples $\mathrm{CE}_1$, $\mathrm{CE}_2$,
$\mathrm{CE}_3$, and $\mathrm{CE}_4$, the pruning algorithms uses
those counterexamples to contract the interval assignment on $X$ from
$X_0$ to $X_1$, $X_2$, $X_3$, and $X_4$ in sequence. In the search for
a counterexample (Line~\ref{algo:generic:find_ce} of
Algorithm~\ref{algo:generic}), it solves the strengthened query,
$f(x) > f(y) + \delta$. Note that the query only requires a
counterexample $y = b$ to be $\delta$-away from a candidate $x$ while
it is clear that the further a counterexample is away from candidates,
the more effective the pruning algorithm is.

Based on this observation, we present a way to improve the performance
of the pruning algorithm for $\ucnf$-formulas. After we obtain a
counterexample $b$, we locally-optimize it with the counterexample
query $\psi$ so that it ``further violates'' the constraints.
Figure~\ref{fig:local_optimization} illustrates this idea. The
algorithm first finds a counterexample $\mathrm{CE}_1$ then refines it
to $\mathrm{CE}'_1$ by using a local-optimization algorithm
(similarly, $\mathrm{CE}_2 \to \mathrm{CE}'_2$). Clearly, this refined
counterexample gives a stronger pruning power than the original
one. This refinement process can also help the performance of the
algorithm by reducing the number of total iterations in the fixedpoint
loop.

The suggested method is based on the assumption that
local-optimization techniques are cheaper than finding a global
counterexample using interval propagation techniques. In our
experiments, we observed that this assumption holds practically. We
will report the details in Section~\ref{sec:evaluation}.




\section{$\delta$-Completeness}\label{sec:analysis}
We now prove that the proposed algorithm is $\delta$-complete for arbitrary $\ucnf$ formulas in $\lrf$. 
In the work of~\cite{DBLP:conf/cade/GaoAC12}, $\delta$-completeness has been proved for branch-and-prune for ground SMT problems, under the assumption that the pruning operators are {\em well-defined}. Thus, the key for our proof here is to show that the $\forall$-pruning operators satisfy the conditions of well-definedness.

The notion of a well-defined pruning operator is defined in~\cite{DBLP:conf/cade/GaoAC12} as follows.
\begin{definition}\label{welldefined}
Let $\phi$ be a constraint, and $\mathcal{B}$ be the set of all boxes in $\mathbb{R}^n$. A pruning operator is a function $\prune: \mathcal{B} \times \mathcal{C} \rightarrow \mathcal{B}$. We say such a pruning operator is well-defined, if for any $B\in \mathcal{B}$, the following conditions are true:
\begin{enumerate}
\item $\prune(B,\phi)\subseteq B$.
\item $B\cap \{a\in \mathbb{R}^n: \phi(a) \mbox{ is true.}\} \subseteq \prune(B,\phi)$.
\item Write $\prune(B,\phi) = B'$. There exists a constant $c \in \mathbb{Q}^+$, such that, if $B' \neq \emptyset$ and $\norm{B'} < \varepsilon$ for some $\varepsilon\in \mathbb{Q}^+$, then for all $a\in B'$, $\phi^{-c\varepsilon}(a)$ is true.
\end{enumerate}
\end{definition}
We will explain the intuition behind these requirements in the next proof, which aims to establish that Algorithm~\ref{algo:generic} defines a well-defined pruning operator.
\begin{lemma}[Well-definedness of $\forall$-Pruning]\label{lem:welldefined-forall-pruning}
Consider an arbitrary $\forall$-clause in the generic form
\[c(x):= \forall y\Big(f_1(x,y)\geq 0\vee ...\vee f_k(x,y)\geq 0\Big).\]
Suppose the pruning operators for $f_1\geq 0,...,f_k\geq 0$ are well-defined, then the $\forall$-pruning operation for $c(x)$ as described in Algorithm~\ref{algo:generic} is well-defined.
\end{lemma}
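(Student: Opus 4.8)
The plan is to verify the three conditions of Definition~\ref{welldefined} for the operator $\prune_c$ computed by Algorithm~\ref{algo:generic}, applied to the $\forall$-clause $c(x) := \forall y\,\bigl(f_1(x,y)\geq 0\vee\cdots\vee f_k(x,y)\geq 0\bigr)$, assuming well-definedness of the component operators $\prune(\cdot,f_i\geq 0)$. Condition~1 (contraction) is the easiest: each iteration of the \textsf{Repeat} loop replaces $B_x$ by $\bigsqcup_{i=0}^k B_i$ where $B_i = B_x\cap\prune(B_x,f_i(x,b)\geq 0)\subseteq B_x$; the box-hull of subsets of $B_x$ is still contained in $B_x$, so the loop is monotonically non-increasing and the returned box is $\subseteq B_x$. (Termination of the loop itself is inherited from the branch-and-prune setting: the loop exits when no progress is made, and one appeals to the same convergence argument as for ordinary ICP.)

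For Condition~2 (solution-preservation), I would argue that no point $a\in B_x$ satisfying $c(a)$ is ever removed. Fix such an $a$; then for the counterexample $b$ returned on Line~\ref{algo:generic:find_ce} we have, by construction of $\solve$ on the $\varepsilon$-strengthened query $\psi^{+\varepsilon}$ and the choice $\delta' < \varepsilon$, that $b$ genuinely falsifies the unstrengthened disjunction, i.e. $f_i(a',b) < 0$ for all $i$ at the witnessing point — but more to the point, since $c(a)$ holds we have $\bigvee_i f_i(a,b)\geq 0$, so $a\in\{z : f_{i_0}(z,b)\geq 0\}$ for some index $i_0$. By well-definedness (Condition~2) of $\prune(\cdot,f_{i_0}(x,b)\geq 0)$, $a\in B_{i_0}$, hence $a\in\bigsqcup_{i=0}^k B_i$. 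Thus $a$ survives every iteration, giving $B_x\cap\{a: c(a)\text{ true}\}\subseteq\prune_c(B_x,c)$. The box-hull (rather than intersection) across $i$ is exactly what makes the disjunction handled soundly here.

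Condition~3 (the $\varepsilon$-bounded error condition) is the substantive part and the one I expect to be the main obstacle. Suppose the returned box $B' = \prune_c(B_x,c)$ is nonempty with $\norm{B'} < \eta$ for some small $\eta$. I want a constant $c_0\in\mathbb{Q}^+$ with $c^{-c_0\eta}(a)$ true for all $a\in B'$, i.e. $\forall y\,\bigvee_i f_i(a,y)\geq -c_0\eta$. The key observation is that the loop terminated, meaning the last iteration made no progress: on the final box $B'$ the $\solve$ call either returned $\emptyset$ or returned a $b$ that failed to shrink $B'$. In the first case, the \textbf{unsat} analysis from Section~3.2 gives directly $\forall y\,\bigvee_i f_i(a,y)\geq -\varepsilon$ for all $a\in B'$, and since we can take $\varepsilon$ proportional to (indeed, bounded by a fixed multiple of) the target precision, this yields the bound with an appropriate $c_0$ — here I would need to be careful that the relationship between $\varepsilon$ and the box width $\eta$ is the one demanded by Definition~\ref{welldefined}, which may require instantiating Algorithm~\ref{algo:generic} with $\varepsilon$ scaled to $\eta$, or else arguing that a fixed $\varepsilon$ suffices because the check at Line~\ref{algo:icp:delta-check} of Algorithm~\ref{algo:icp} controls when $\delta$-sat is declared. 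In the second (stalled-progress) case, I would combine the fact that each $\prune(B',f_i(x,b)\geq 0)$ returned essentially all of $B'$ with Condition~3 of the component operators: a nonempty pruned box of width $<\eta$ that equals its input forces $f_i^{-c_i\eta}(a,b)$ to hold for the relevant $i$, and reassembling over the disjunction gives the desired relaxed clause with $c_0 = \max_i c_i$ (plus the $\varepsilon$ slack). The delicate bookkeeping is to track all three constants $\delta',\varepsilon,\delta$ and the component constants $c_i$ simultaneously and show they can be chosen consistently with $0<\delta'<\varepsilon<\delta$ while still yielding a single linear-in-$\eta$ error bound; this is where the "double-sided error control" of Section~3.2 does the real work, and I would present that accounting explicitly as the heart of the proof.
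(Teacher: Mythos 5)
Your overall plan---checking the three conditions of Definition~\ref{welldefined} for the operator of Algorithm~\ref{algo:generic} by importing the well-definedness of the component operators---is exactly the paper's, and your Conditions~1 and~2 match its proof (your Condition~2, which picks the index $i_0$ with $f_{i_0}(a,b)\geq 0$ and applies the component Condition~2 before the box-hull, is just a more explicit rendering of the paper's remark that any choice $y=b$ preserves solutions on $x$). The divergence is Condition~3, which you leave unfinished. The paper's resolution is more direct than what you anticipate: it neither scales the strengthening parameter $\varepsilon$ with the width of the returned box nor case-splits on how the loop terminated. Its entire argument is that the $\varepsilon$-strengthening at Line~\ref{algo:generic:ce_strengthening}, with $\delta'<\varepsilon$, rules out spurious counterexamples, so the $b$ used at Line~\ref{algo:generic:pruning_intersect} is a genuine one; then the pruning there is ordinary pruning with the constraints $f_i(x,b)\geq 0$, so the component operators' Condition~3 applies with the width of the nonempty pruned boxes, and a single constant taken over the $c_i$ gives the relaxed disjunction---the width-proportional bound is inherited from the components, not manufactured from $\varepsilon$. (You take $\max_i c_i$ where the paper takes $\min_i c_i$; for a disjunction the dominating constant is the safe choice, so your instinct is the more careful one.) Your ``$\solve$ returns $\emptyset$'' branch, where only the fixed bound $\bigvee_i f_i(x,y)\geq-\varepsilon$ is available, is a genuine subtlety, but the paper does not fold it into the lemma at all: in that case the operator returns its input unchanged (Line~\ref{algo:generic:no_ce}), and the constraint $\varepsilon<\delta$ is invoked in the Section~3.2 analysis of the $\delta$-sat answer rather than in Condition~3. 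So to align with the paper you should close Condition~3 with the short genuine-counterexample argument above, instead of attempting the $\delta'$/$\varepsilon$/$\delta$ accounting inside the lemma.
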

\begin{proof}
We prove that the pruning operator defined by Algorithm~\ref{algo:generic} satisfies the three conditions in Definition~\ref{welldefined}. Let $B_0,...,B_k$ be a sequence of boxes, where $B_0$ is the input box $B_x$ and $B_k$ is the returned box $B$, which is possibly empty.

The first condition requires that the pruning operation for $c(x)$ is reductive. That is, we want to show that $B_x \subseteq B_x^{\mathrm{prev}}$ holds in Algorithm~\ref{algo:generic}. If it does not find a counterexample (Line~\ref{algo:generic:no_ce}), we have $B_{x} = B_x^{\mathrm{prev}}$. So the condition holds trivially. Consider the case where it finds a counterexample $b$.
The pruned box $B_{x}$ is obtained through box-hull of all the $B_i$ boxes (Line~\ref{algo:generic:pruning_hull}), which are results of pruning on $B_x^{\mathrm{prev}}$ using ordinary constraints of the form $f_i(x,b)\geq 0$ (Line~\ref{algo:generic:pruning_intersect}), for a counterexample $b$. Following the assumption that the pruning operators are well-defined for each ordinary constraint $f_i$ used in the algorithm, we know that $B_i \subseteq B_x^{\mathrm{prev}}$ holds as a loop invariant for the loop from Line~\ref{algo:generic:pruning_begin} to Line~\ref{algo:generic:pruning_loop_end}. Thus, taking the box-hull of all the $B_i$, we obtain $B_{x}$ that is still a subset of $B_x^{\mathrm{prev}}$.

The second condition requires that the pruning operation does not eliminate real solutions. Again, by the assumption that the pruning operation on Line~\ref{algo:generic:pruning_intersect} does not lose any valid assignment on $x$ that makes the $\forall$-clause true. In fact, since $y$ is universally quantified, any choice of assignment $y=b$ will preserve solution on $x$ as long as the ordinary pruning operator is well-defined. Thus, this condition is easily satisfied.

The third condition is the most nontrivial to establish. It ensures that when the pruning operator does not prune a box to the empty set, then the box should not be ``way off'', and in fact, should contain points that satisfy an appropriate relaxation of the constraint. We can say this is a notion of ``faithfulness'' of the pruning operator. For constraints defined by simple continuous functions, this can be typically guaranteed by the modulus of continuity of the function (Lipschitz constants as a special case). Now, in the case of $\forall$-clause pruning, we need to prove that the faithfulness of the ordinary pruning operators that are used translates to the faithfulness of the $\forall$-clause pruning results. First of all, this condition would not hold, if we do not have the strengthening operation when searching for counterexamples (Line~\ref{algo:generic:ce_strengthening}). As is shown in Example~\ref{example:delta-weakening}, because of the weakening that $\delta$-decisions introduce when searching for a counterexample, we may obtain a {\em spurious counterexample} that does not have pruning power. In other words, if we keep using a wrong counterexample that already satisfies the condition, then we are not able to rule out wrong assignments on $x$. Now, since we have introduced $\varepsilon$-strengthening at the counterexample search, we know that $b$ obtained on Line~\ref{algo:generic:find_ce} is a true counterexample. Thus, for some $x=a$, $f_i(a,b)<0$ for every $i$. By assumption, the ordinary pruning operation using $b$ on Line~\ref{algo:generic:pruning_intersect} guarantees faithfulness. That is, suppose the pruned result $B_i$ is not empty and $\norm{B_i} \leq \varepsilon$, then there exists constant $c_i$ such that $f_i(x,b)\geq -c_i \varepsilon$ is true. Thus, we can take the $c = \min_i c_i$ as the constant for the pruning operator defined by the full clause, and conclude that the disjunction $\bigvee_{i=0}^k f_i(x,y)<-c\varepsilon$ holds for $\norm{B_x} \le \varepsilon$.
\end{proof}
Using the lemma, we follow the results in~\cite{DBLP:conf/cade/GaoAC12}, and conclude that the branch-and-prune method in Algorithm~\ref{algo:icp} is delta-complete:
\begin{theorem}[$\delta$-Completeness] For any $\delta\in\mathbb{Q}^+$, using the proposed $\forall$-pruning operators defined in Algorithm~\ref{algo:generic} in the branch-and-prune framework described in Algorithm~\ref{algo:icp} is $\delta$-complete for the class of $\ucnf$-formulas in $\lrf$, assuming that the pruning operators for all the base functions are well-defined.
\end{theorem}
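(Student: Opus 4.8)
The theorem follows by combining Lemma~\ref{lem:welldefined-forall-pruning} with the $\delta$-completeness result for branch-and-prune on ground SMT from~\cite{DBLP:conf/cade/GaoAC12}. The overall structure of the argument has three parts: (i) reduce the $\ucnf$ case to the ground-SMT framework by treating each $\forall$-clause as a single constraint equipped with the pruning operator of Algorithm~\ref{algo:generic}; (ii) check that Algorithm~\ref{algo:generic} is a genuine pruning operator in the sense of Definition~\ref{welldefined}, which is exactly Lemma~\ref{lem:welldefined-forall-pruning}, provided the ordinary pruning operators it invokes on the atomic constraints $f_{ij}(x,b)\ge 0$ are well-defined (this holds by hypothesis, and additionally requires that the inner \solve\ call terminates — which it does, being itself an instance of branch-and-prune on a ground formula); and (iii) verify that the constraints $0<\delta'<\varepsilon<\delta$ imposed in Algorithm~\ref{algo:generic} make the answer reported by the outer loop actually correspond to the advertised $\delta$-weakening.

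First I would instantiate the machinery of~\cite{DBLP:conf/cade/GaoAC12}: given an input $\ucnf$-formula $\varphi(x)=\bigwedge_{i=0}^m c_i(x)$ with $c_i(x)=\forall y\,\bigvee_{j=0}^{k_i} f_{ij}(x,y)\ge 0$, regard the finite set $\{c_0,\dots,c_m\}$ as the constraint set $\mathcal{C}$, and attach to each $c_i$ the pruning operator $\prune(B_x,c_i)$ computed by Algorithm~\ref{algo:generic} (with the parameters $\delta',\varepsilon$ chosen so that $0<\delta'<\varepsilon<\delta$, e.g.\ $\varepsilon=\delta/2$, $\delta'=\delta/4$). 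Running Algorithm~\ref{algo:icp} on this instance, where the $\delta$-check on Line~\ref{algo:icp:delta-check} uses the box width $\norm{B'}$ rather than $\norm{f(B')}$ in the general case, is then literally the branch-and-prune procedure of~\cite{DBLP:conf/cade/GaoAC12}. By Lemma~\ref{lem:welldefined-forall-pruning} each of these operators is well-defined, so the cited theorem gives termination and that the outer loop returns \textbf{unsat} only when $\varphi$ is unsatisfiable and returns \textbf{$\delta$-sat} only when some $\delta$-relaxation of $\varphi$ is satisfiable.

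The part that needs care, and which I would spell out explicitly, is that the relaxation certified by the $\delta$-sat answer is precisely $\varphi^{-\delta}$ and not something weaker. Here the bookkeeping of the double-sided error control enters: when the outer loop returns $\delta$-sat it has found a small box $B$ with $\norm{B}\le\delta$ that survived all $\forall$-pruning operators without being emptied; applying condition~3 of well-definedness to each $c_i$ with constant $c_i$ (take $\varepsilon$ in that condition to be $\norm{B}$, so the relaxation is $c_i^{-c_i\norm{B}}$), together with the unsat-case analysis in Section~3.2 that a completed $\forall$-pruning without counterexample certifies $\forall y\,\bigvee_j f_{ij}(a,y)\ge-\varepsilon$, shows that every $a\in B$ satisfies $\varphi^{-\delta}$ once we have arranged $\varepsilon<\delta$ and absorbed the $c_i\norm{B}$ terms (rescaling $\delta$ by the uniform constant $\max_i c_i$ if one wants the literal $\varphi^{-\delta}$, exactly as in the ground-SMT case of~\cite{DBLP:conf/cade/GaoAC12}). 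The main obstacle is therefore not any single hard step but the careful alignment of three separate error budgets — the $\delta'$ used inside \solve, the $\varepsilon$-strengthening of the counterexample query, and the final $\delta$ of the outer loop — so that the chain of inequalities $\delta'<\varepsilon<\delta$ (plus the well-definedness constant) yields a conclusion about $\varphi^{-\delta}$; once Lemma~\ref{lem:welldefined-forall-pruning} is in hand this is bookkeeping, and the theorem reduces to a citation of~\cite{DBLP:conf/cade/GaoAC12}.
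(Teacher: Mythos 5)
Your proposal takes essentially the same route as the paper's proof, which simply invokes Theorem 4.2 of~\cite{DBLP:conf/cade/GaoAC12} (branch-and-prune is $\delta$-complete iff all pruning operators are well-defined) together with Lemma~\ref{lem:welldefined-forall-pruning} applied to each $\forall$-clause, i.e.\ your steps (i)--(ii). The error-budget bookkeeping in your step (iii) is material the paper delegates to Section~3.2 and to the proof of the lemma rather than repeating in the theorem, so your argument is correct and, if anything, more explicit than the paper's own two-line proof.
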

\begin{proof}
Following Theorem 4.2 ($\delta$-Completeness of $\mathrm{ICP}_{\varepsilon}$) in~\cite{DBLP:conf/cade/GaoAC12}, a branch-and-prune algorithm is $\delta$-complete iff the pruning operators in the algorithm are all well-defined. Following Lemma~\ref{lem:welldefined-forall-pruning}, Algorithm~\ref{algo:generic} always defines well-defined pruning operators, assuming the pruning operators for the base functions are well-defined. Consequently, Algorithm~\ref{algo:generic} and Algorithm~\ref{algo:icp} together define a delta-complete decision procedure for $\ucnf$-problems in $\lrf$.
\end{proof}


\section{Evaluation}\label{sec:evaluation}
\paragraph{Implementation}
We implemented the algorithms on top of
dReal~\cite{DBLP:conf/cade/GaoKC13}, an open-source delta-SMT
framework. We used IBEX-lib~\cite{DBLP:conf/aaai/TrombettoniANC11} for
interval constraints pruning and
CLP~\cite{Lougee-Heimer:2003:COI:1014495.1014507} for linear
programming. For local optimization, we used
NLopt~\cite{Johnson2011}. In particular, we used SLSQP (Sequential
Least-Squares Quadratic Programming) local-optimization
algorithm~\cite{Kraft:1994:ATM:192115.192124} for differentiable
constraints and COBYLA (Constrained Optimization BY Linear
Approximations) local-optimization algorithm~\cite{powell1998direct}
for non-differentiable constraints. The prototype solver is able to
handle $\exists\forall$-formulas that involve most standard elementary
functions, including power, $\exp$, $\log$, $\sqrt{\cdot}$,
trigonometric functions ($\sin$, $\cos$, $\tan$), inverse
trigonometric functions ($\arcsin$, $\arccos$, $\arctan$), hyperbolic
functions ($\sinh$, $\cosh$, $\tanh$), etc.

\paragraph{Experiment environment}
\newcommand{\ExpCPU}{2.9 GHz Intel Core i7}
\newcommand{\ExpRAM}{16 GB}
\newcommand{\ExpOS}{MacOS 10.13.4}
All experiments were ran on a 2017 Macbook Pro with~\ExpCPU{}
and~\ExpRAM{} RAM running~\ExpOS{}. All code and benchmarks are
available at \url{https://github.com/dreal/CAV18}.

\paragraph{Parameters}
In the experiments, we chose the strengthening parameter
$\epsilon = 0.99 \delta$ and the weakening parameter in the
counterexample search $\delta' = 0.98 \delta$. In each call to NLopt,
we used $\num{1e-6}$ for both of absolute and relative tolerances on function
value, $\num{1e-3}$ seconds for a timeout, and $100$ for the maximum
number of evaluations. These values are used as stopping criteria in NLopt.

\subsection{Nonlinear Global Optimization}

\begin{table}[!t]
  \centering
  \small
  \begin{tabular}{|l|r|r|r|r|r|r|}
    \hline
    \multicolumn{1}{|c|}{\multirow{2}{*}{Name}} & \multicolumn{3}{c|}{Solution} & \multicolumn{3}{c|}{Time (sec)}      \\  \cline{2-7}
                                                & \multicolumn{1}{c|}{Global}   & \multicolumn{1}{c|}{No L-Opt.}  & \multicolumn{1}{c|}{L-Opt.} & \multicolumn{1}{c|}{No L-Opt.} & \multicolumn{1}{c|}{L-Opt.} & \multicolumn{1}{c|}{Speed Up}       \\
    \hline
    \hline
    Ackley 2D & 0.00000 & 0.00000 & 0.00000 & 0.0579 & 0.0047 & 12.32\\
    Ackley 4D & 0.00000 & 0.00005 & 0.00000 & 8.2256 & 0.1930 & 42.62\\
    Aluffi Pentini & -0.35230 & -0.35231 & -0.35239 & 0.0321 & 0.1868 & 0.17\\
    Beale & 0.00000 & 0.00003 & 0.00000 & 0.0317 & 0.0615 & 0.52\\
    Bohachevsky1 & 0.00000 & 0.00006 & 0.00000 & 0.0094 & 0.0020 & 4.70\\
    Booth & 0.00000 & 0.00006 & 0.00000 & 0.5035 & 0.0020 & 251.75\\
    Brent & 0.00000 & 0.00006 & 0.00000 & 0.0095 & 0.0017 & 5.59\\
    Bukin6 & 0.00000 & 0.00003 & 0.00003 & 0.0093 & 0.0083 & 1.12\\
    Cross in Tray & -2.06261 & -2.06254 & -2.06260 & 0.5669 & 0.1623 & 3.49\\
    Easom & -1.00000 & -1.00000 & -1.00000 & 0.0061 & 0.0030 & 2.03\\
    EggHolder & -959.64070 & -959.64030 & -959.64031 & 0.0446 & 0.0211 & 2.11\\
    Holder Table2 & -19.20850 & -19.20846 & -19.20845 & 52.9152 & 41.7004 & 1.27\\
    Levi N13 & 0.00000 & 0.00000 & 0.00000 & 0.1383 & 0.0034 & 40.68\\
    Ripple 1 & -2.20000 & -2.20000 & -2.20000 & 0.0059 & 0.0065 & 0.91\\
    Schaffer F6 & 0.00000 & 0.00004 & 0.00000 & 0.0531 & 0.0056 & 9.48\\
    Testtube Holder & -10.87230 & -10.87227 & -10.87230 & 0.0636 & 0.0035 & 18.17\\
    Trefethen & -3.30687 & -3.30681 & -3.30685 & 3.0689 & 1.4916 & 2.06\\
    W Wavy & 0.00000 & 0.00000 & 0.00000 & 0.1234 & 0.0138 & 8.94\\
    Zettl & -0.00379 & -0.00375 & -0.00379 & 0.0070 & 0.0069 & 1.01\\
    \hline
    \hline
    Rosenbrock Cubic & 0.00000 & 0.00005 & 0.00002 & 0.0045 & 0.0036 & 1.25\\
    Rosenbrock Disk & 0.00000 & 0.00002 & 0.00000 & 0.0036 & 0.0028 & 1.29\\
    Mishra Bird & -106.76454 & -106.76449 & -106.76451 & 1.8496 & 0.9122 & 2.03\\
    Townsend & -2.02399 & -2.02385 & -2.02390 & 2.6216 & 0.5817 & 4.51\\
    Simionescu & -0.07262 & -0.07199 & -0.07200 & 0.0064 & 0.0048 & 1.33\\
    \hline
  \end{tabular}
  \caption{Experimental results for nonlinear global optimization
    problems: The first 19 problems (Ackley 2D -- Zettl) are
    unconstrained optimization problems and the last five problems
    (Rosenbrock Cubic -- Simionescu) are constrained optimization
    problems. We ran our prototype solver over those instances with
    and without local-optimization option (``L-Opt.'' and ``No
    L-Opt.'' columns) and compared the results. We chose
    $\delta = 0.0001$ for all instances.}
  \label{table:experiments}
\end{table}


We encoded a range of highly nonlinear $\exists\forall$-problems from
constrained and unconstrained optimization
literature~\cite{jamil2013literature,wiki:opt_test}. Note that the
standard optimization problem
\[\min f(x) \mbox{ s.t. } \varphi(x),\ \ x\in \mathbb{R}^n,\]
 can be encoded as the logic formula:
 \[
   \varphi(x) \wedge \forall y \Big(\varphi(y)\rightarrow f(x)\leq f(y)\Big).
 \]

\begin{figure}[!t]
  \centering
  \begin{subfigure}[b]{0.475\textwidth}
    \centering
    \includegraphics[width=\textwidth]{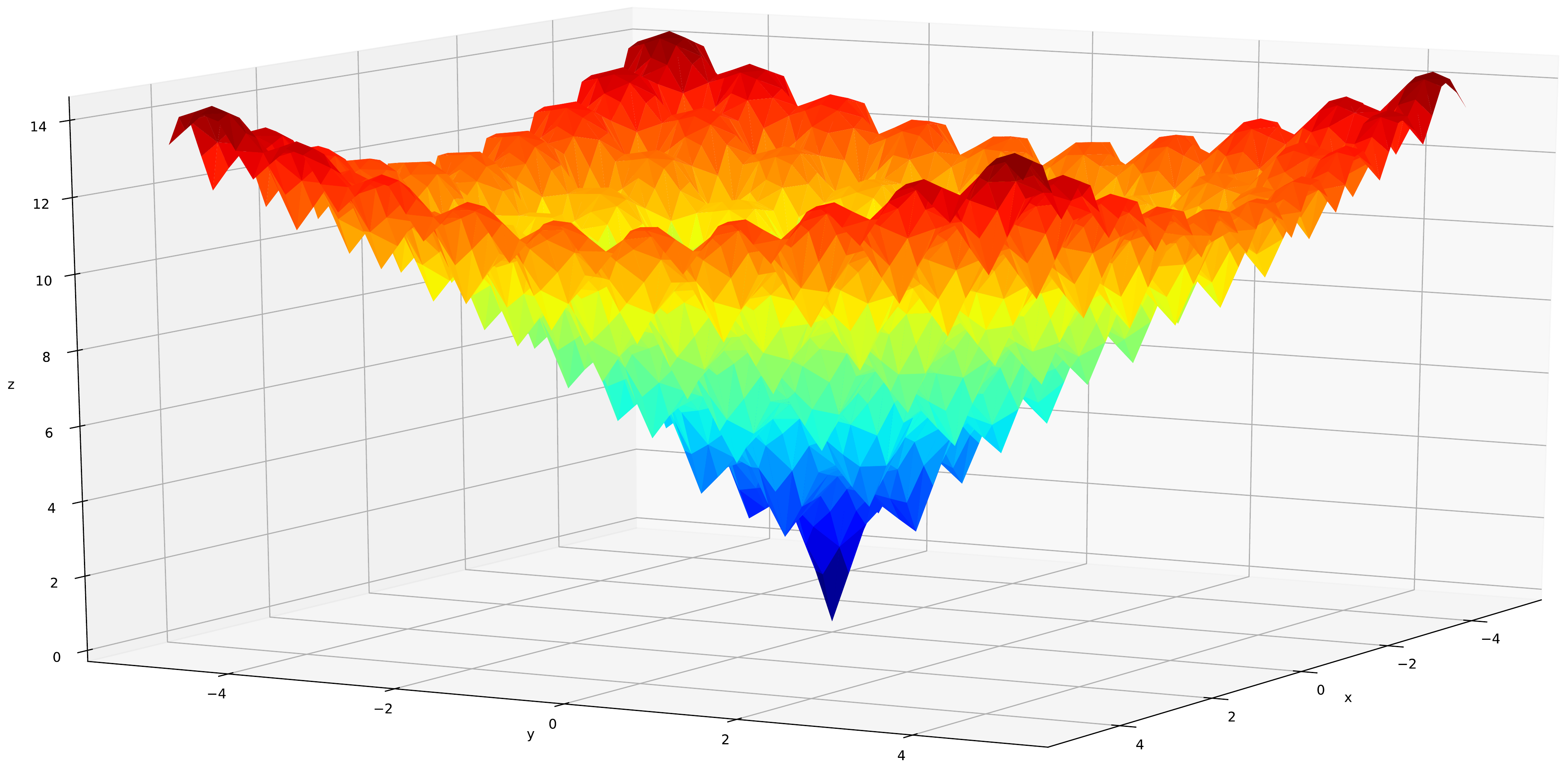}
    \caption{\small Ackley Function.}
    \label{fig:Ackley}
  \end{subfigure}
  \begin{subfigure}[b]{0.475\textwidth}
    \centering
    \includegraphics[width=\textwidth]{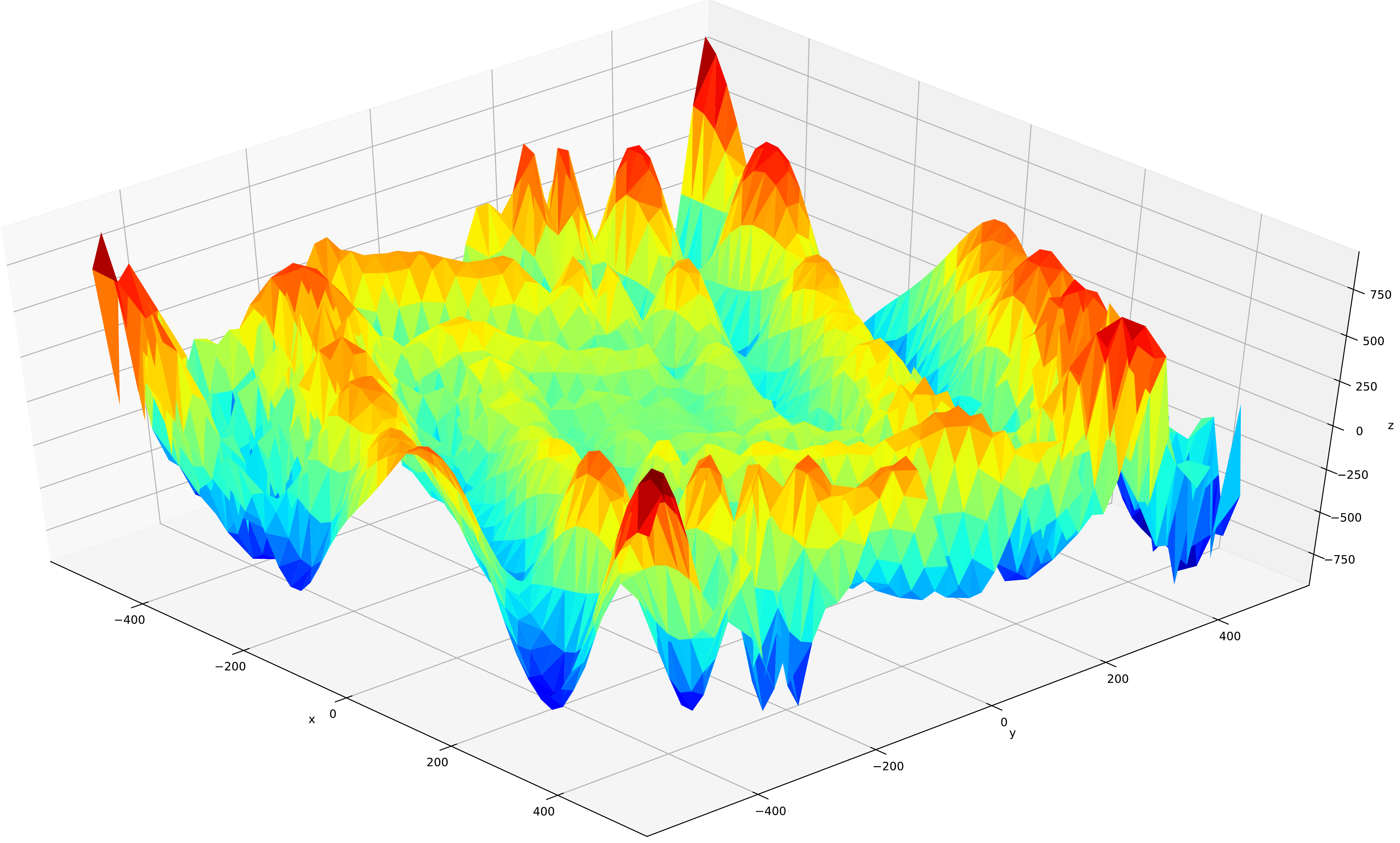}
    \caption{\small EggHolder Function.}
    \label{fig:EggHolder}
  \end{subfigure}
  \vskip\baselineskip
  \begin{subfigure}[b]{0.475\textwidth}
    \centering
    \includegraphics[width=\textwidth]{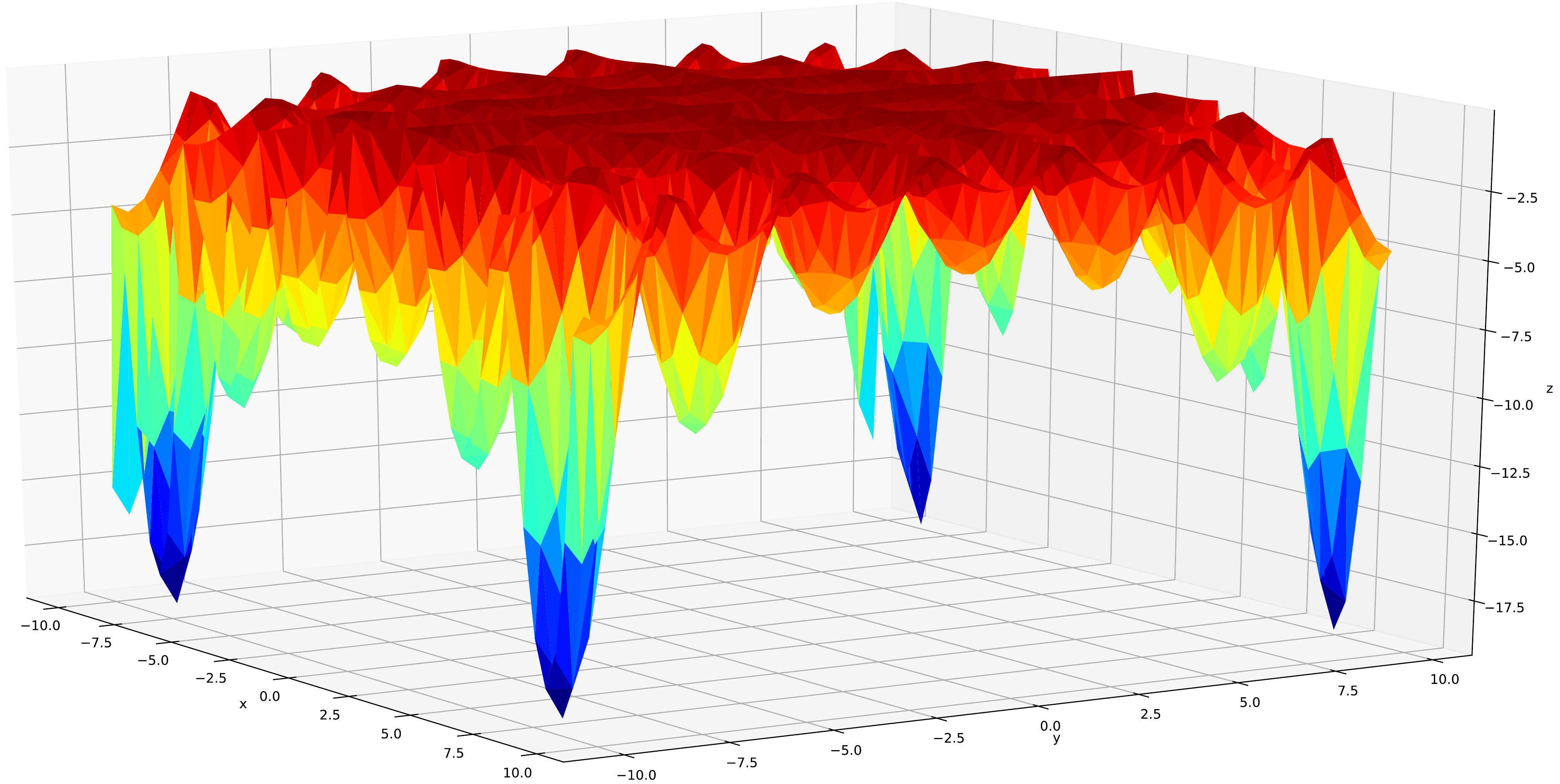}
    \caption{\small Holder Table2 Function.}
    \label{fig:Booth}
  \end{subfigure}
  \begin{subfigure}[b]{0.475\textwidth}
    \centering
    \includegraphics[width=\textwidth]{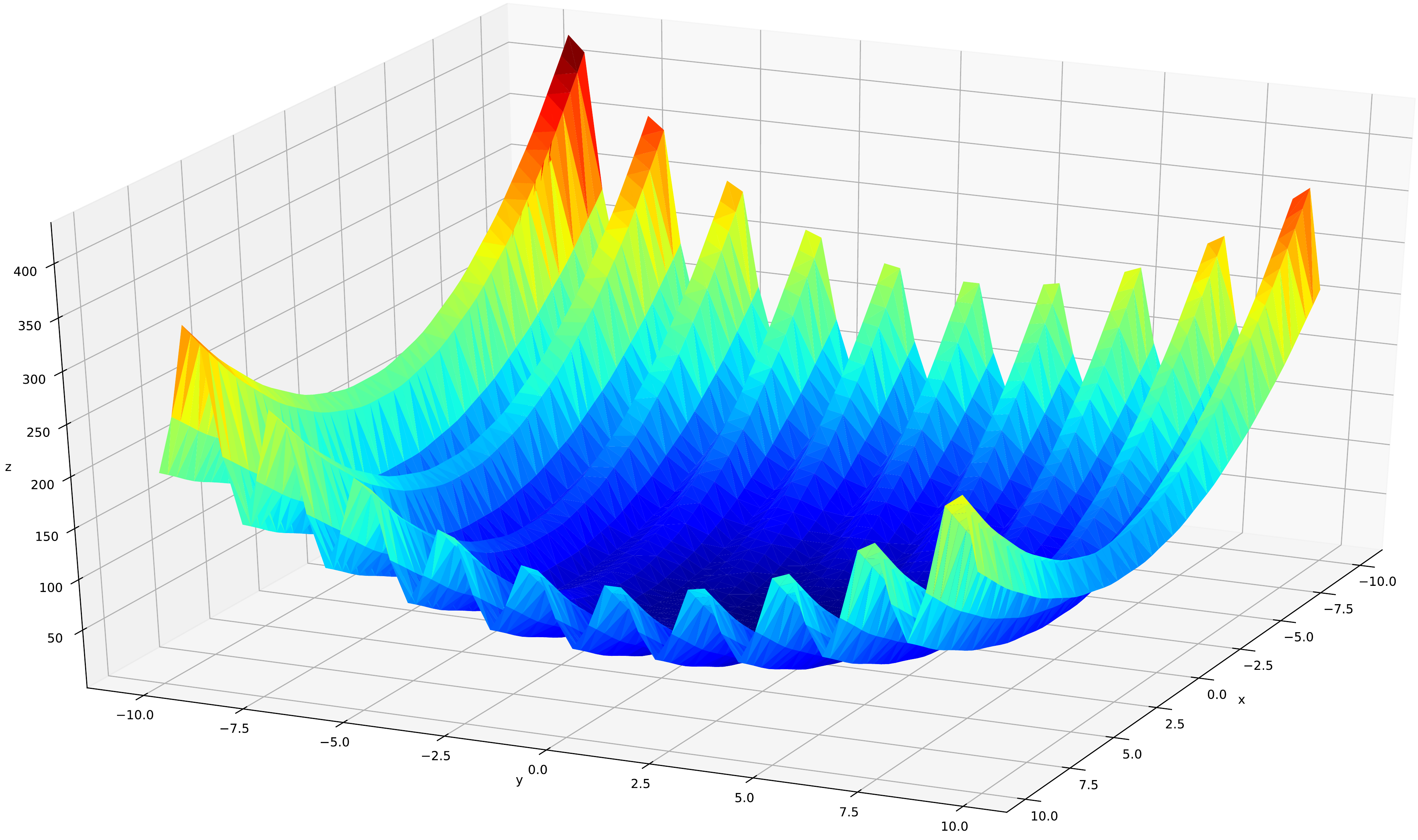}
    \caption{\small Levi N13 Function.}
    \label{fig:Booth}
  \end{subfigure}
  \begin{subfigure}[b]{0.475\textwidth}
    \centering
    \includegraphics[width=\textwidth]{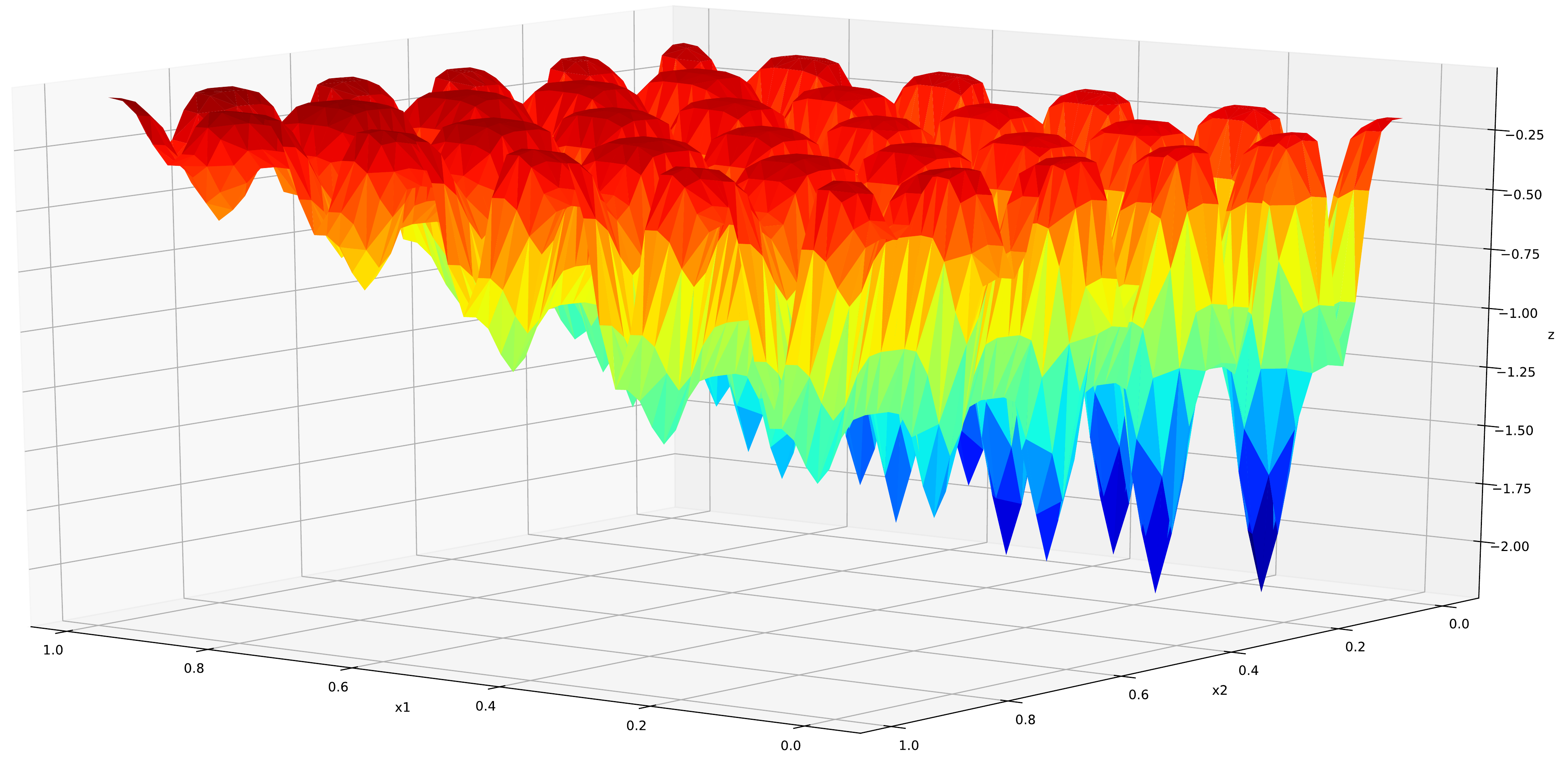}
    \caption{\small Ripple 1 Function.}
    \label{fig:Ackley}
  \end{subfigure}
  \begin{subfigure}[b]{0.475\textwidth}
    \centering
    \includegraphics[width=\textwidth]{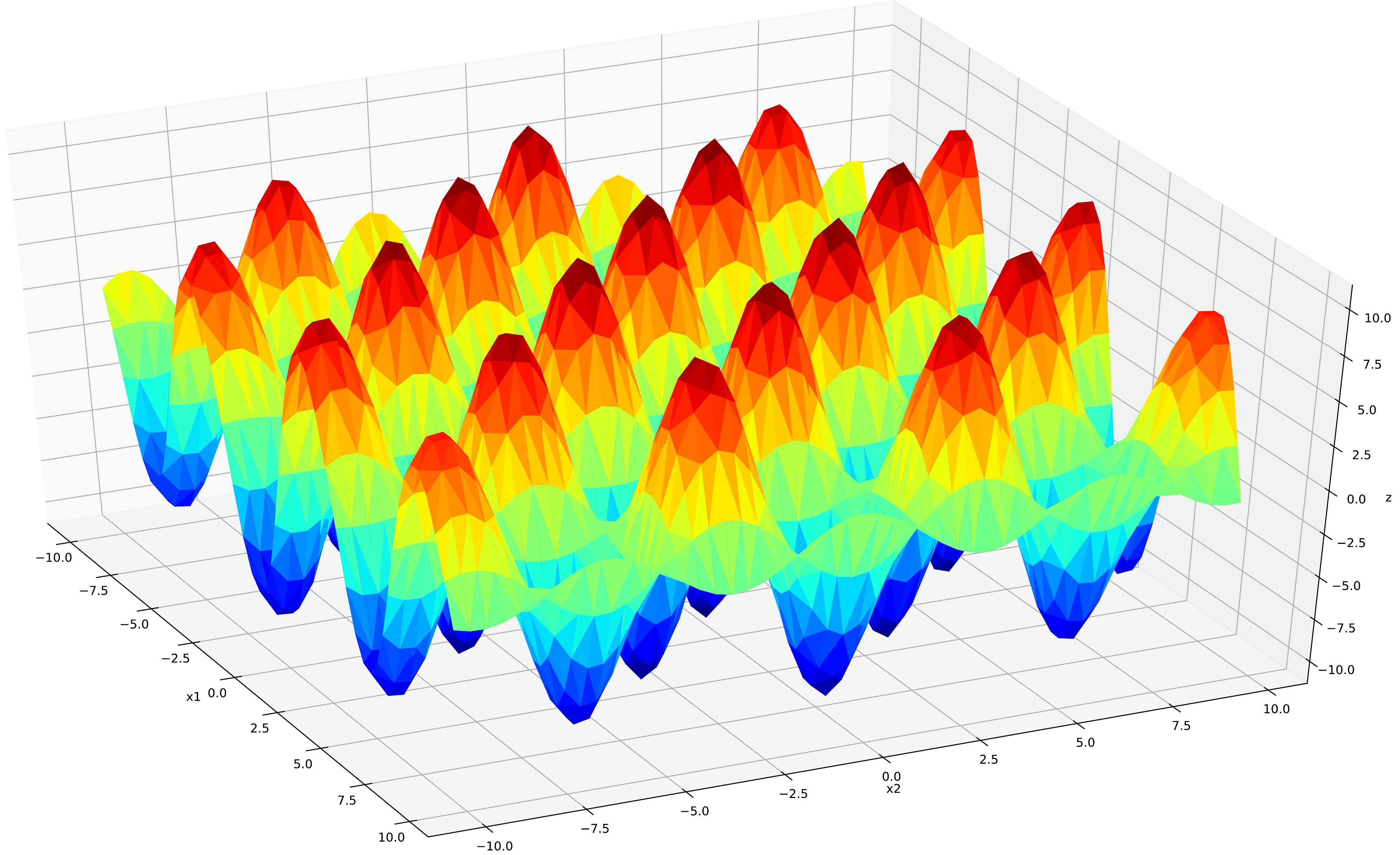}
    \caption{\small Testtube Holder Function.}
    \label{fig:TestTubeHolder}
  \end{subfigure}
  \caption[]{Nonlinear Global Optimization Examples.}
  \label{fig:nonconvexexamples}
\end{figure}


As plotted in Figure~\ref{fig:nonconvexexamples}, these optimization
problems are non-trivial: they are highly non-convex problems that are
designed to test global optimization or genetic programming
algorithms. Many such functions have a large number of local
minima. For example, Ripple 1 Function~\cite{jamil2013literature}
\[
  f(x_1, x_2) = \sum_{i=1}^2
  -e^{-2(\log2)\left(\frac{x_1 - 0.1}{0.8}\right)^2} (\sin^6(5 \pi x_i) + 0.1 \cos^2(500 \pi x_i))
\]
defined in $x_i \in [0, 1]$ has 252004 local minima with the global
minima $f(0.1, 0.1) = -2.2$. As a result, local-optimization
algorithms such as gradient-descent would not work for these problems
for itself. By encoding them as $\exists\forall$-problems, we can
perform guaranteed global optimization on these problems.

Table~\ref{table:experiments} provides a summary of the experiment
results. First, it shows that we can find minimum values which are
close to the known global solutions. Second, it shows that enabling
the local-optimization technique speeds up the solving process
significantly for 20 instances out of 23 instances.

\subsection{Synthesizing Lyapunov Function for Dynamical System}

We show that the proposed algorithm is able to synthesize Lyapunov
functions for nonlinear dynamic
systems described by a set of ODEs:
\[
\dot{\vec{x}}(t) = f_i(\vec{x}(t)), \quad \forall \vec{x}(t) \in X_i.
\]

Our approach is different from a recent
related-work~\cite{DBLP:conf/hybrid/KapinskiDSA14} where they used
dReal only to verify a candidate function which was found by a
simulation-guided algorithm. In contrast, we want to do both of search
and verify steps by solving a single $\exists\forall$-formula. Note
that to verify a Lyapunov candidate function $v : X \to \mathbb{R}^+$,
we need to show that the function $v$ satisfies the following conditions:
\begin{align*}
  \forall \vec{x} \in X\setminus\vec{0} \ v(\vec{x})(\vec{0}) & = 0\\
  \forall \vec{x} \in X \ \nabla v(\vec{x}(t))^T \cdot f_i(\vec{x}(t)) & \le 0.
\end{align*}
We assume that a Lyapunov function is a polynomial of some fixed
degrees over $\vec{x}$, that is,
$v(\vec{x}) = \vec{z}^T\vec{P}\vec{z}$ where $\vec{z}$ is a vector of
monomials over $\vec{x}$ and $P$ is a symmetric matrix. Then, we can
encode this synthesis problem into the $\exists\forall$-formula:
\begin{align*}
  \exists \vec{P}\ & [(v(\vec{x}) = (\vec{z}^T\vec{P}\vec{z})) \land \\
                   &  (\forall \vec{x} \in X\setminus\vec{0} \ v(\vec{x})(\vec{0})  = 0) \land \\
                   &  (\forall \vec{x} \in X \ \nabla v(\vec{x}(t))^T \cdot f_i(\vec{x}(t))  \le 0)]
\end{align*}

In the following sections, we show that we can handle two examples
in~\cite{DBLP:conf/hybrid/KapinskiDSA14}.

\subsubsection{Normalized Pendulum}
Given a standard pendulum system with normalized parameters
\[
  \begin{bmatrix}
    \dot{x}_1\\
    \dot{x}_2
  \end{bmatrix}
  =
  \begin{bmatrix}
    x_2\\
    -\sin(x_1) - x_2
  \end{bmatrix}
\]
and a quadratic template for a Lyapunov function
$v(\vec{x}) = \vec{x}^T\vec{P}\vec{x} = c_1x_1x_2 + c_2x_1^2 + c_3
x_2^2$, we can encode this synthesis problem into the following
$\exists\forall$-formula:
\begin{align*}
  \exists c_1c_2c_3\ \forall x_1x_2\ &
                                       [((50 c_3 x_1 x_2  + 50 x_1^2 c_1 + 50 x_2^2 c_2 > 0.5) \land \\
                                     & (100 c_1 x_1 x_2 + 50 x_2 c_3 + (- x_2 - \sin(x_1) (50 x_1 c_3 + 100 x_2 c_2)) < -0.5)) \lor \\
                                     & \neg ((0.01 \le x_1^2 + x_2^2) \land (x_1^2 + x_2^2 \le 1))]
\end{align*}

Our prototype solver takes $44.184$ seconds to synthesize the following function
as a solution to the problem for the bound $\norm{\vec{x}} \in [0.1, 1.0]$
and $c_{i} \in [0.1, 100]$ using $\delta = 0.05$:
\[
  v = 40.6843 x_1 x_2 + 35.6870 x_1^2 + 84.3906 x_2^2.
\]

\subsubsection{Damped Mathieu System} Mathieu dynamics are
time-varying and defined by the following ODEs:
\[
  \begin{bmatrix}
    \dot{x}_1\\
    \dot{x}_2
  \end{bmatrix}
  =
  \begin{bmatrix}
    x_2\\
    -x_2 - (2 + \sin(t))x_1
  \end{bmatrix}.
\]

Using a quadratic template for a Lyapunov function
$v(\vec{x}) = \vec{x}^T\vec{P}\vec{x} = c_1x_1x_2 + c_2x_1^2 + c_3
x_2^2$, we can encode this synthesis problem into the following
$\exists\forall$-formula:
\begin{align*}
  \exists c_1c_2c_3\ \forall x_1x_2t\ &
  [(50 x_1 x_2 c_2 + 50 x_1^2 c_1 + 50 x_2^2 c_3 > 0) \land \\
  & (100 c_1 x_1 x_2 + 50 x_2 c_2 + (- x_2  -   x_1(2 + \sin(t)))(50x_1 c_2 + 100 x_2 c_3) < 0) \\
&  \lor
  \neg((0.01 \le x_1^2 + x_2^2) \land
  (0.1 \le t) \land
  (t \le 1) \land
  (x_1^2 + x_2^2 \le 1))]
\end{align*}
Our prototype solver takes $26.533$ seconds to synthesize the
following function as a solution to the problem for the bound
$\norm{\vec{x}} \in [0.1, 1.0]$, $t \in [0.1, 1.0]$, and
$c_i \in [45, 98]$ using $\delta = 0.05$:
\[
   V = 54.6950 x_1 x_2 + 90.2849  x_1^2 + 50.5376 x_2^2.
\]








\section{Conclusion}\label{sec:conclusion}
We have described delta-decision procedures for solving exists-forall formulas in the first-order theory over the reals with computable real functions. These formulas can encode a wide range of hard practical problems such as general constrained optimization and nonlinear control synthesis. We use a branch-and-prune framework, and design special pruning operators for universally-quantified constraints such that the procedures can be proved to be delta-complete, where suitable control of numerical errors is crucial. We demonstrated the effectiveness of the procedures on various global optimization and Lyapunov function synthesis problems.


\bibliographystyle{splncs}
\bibliography{tau}

\end{document}